	\newcommand{\KP}[1]{\marginpar{{\small {\bf KP:} #1}}}
	\newcommand{\RS}[1]{\marginpar{{\small {\bf RS:} #1}}}
	\newcommand{\AS}[1]{\marginpar{{\small {\bf AS:} #1}}}
	\newcommand{\HT}[1]{\marginpar{{\small {\bf HT:} #1}}}
	\newcommand{\KP}[1]{}
	\newcommand{\RS}[1]{}
	\newcommand{\AS}[1]{}
	\newcommand{\HT}[1]{}
\newtheorem{theorem}{Theorem}
\newtheorem{lemma}[theorem]{Lemma}
\theoremstyle{definition}
\theoremstyle{remark}
\newcommand{\ignore}[1]{} 
\newcommand{\cR}{\ensuremath{\mathcal R}}
\newcommand{\E}{\ensuremath{\mathbb E}}
\newcommand{\bigO}{\ensuremath{\mathcal O}}
\long\def\symbolfootnote[#1]#2{\begingroup
\def\thefootnote{\fnsymbol{footnote}}\footnote[#1]{#2}\endgroup}
\begin{document}


\begin{center}

\LARGE Explosive Percolation in Erd\H os-R\'enyi-Like Random Graph Processes
\vspace{2mm}

\Large Konstantinos Panagiotou \quad Reto Spöhel\symbolfootnote[1]{The author was supported by a fellowship of the Swiss National Science Foundation}
\vspace{2mm}

\large
  Max Planck Institute for Informatics \\
  66123 Saarbr\" ucken, Germany \\
  {\small\{{\tt kpanagio|rspoehel}\}{\tt @mpi-inf.mpg.de}}
  
\vspace{5mm}

\Large Angelika Steger \quad Henning Thomas\symbolfootnote[2]{The author was supported by the Swiss National Science Foundation, grant 200021-120284.}
\vspace{2mm}

\large
  Institute of Theoretical Computer Science \\
  ETH Zurich, 8092 Zurich, Switzerland \\
  {\small\{{\tt steger|hthomas}\}{\tt @inf.ethz.ch}}
\vspace{5mm}

\small

\begin{minipage}{0.8\linewidth}
\textsc{Abstract.}
The evolution of the largest component has been studied intensely in a variety of random graph processes, starting in 1960 with the Erd\H os-R\'enyi process. It is well known that this process undergoes a phase transition at $n/2$ edges when, asymptotically almost surely, a linear-sized component appears. Moreover, this phase transition is continuous, i.e., in the limit the function $f(c)$ denoting the fraction of vertices in the largest component in the process after $cn$ edge insertions is continuous. A variation of the Erd\H os-R\'enyi process are the so-called Achlioptas processes in which in every step a random pair of edges is drawn, and a fixed edge-selection rule selects one of them to be included in the graph while the other is put back. Recently, Achlioptas, D'Souza and Spencer \cite{MR2502843} gave strong numerical evidence that a variety of edge-selection rules exhibit a discontinuous phase transition. However, Riordan and Warnke \cite{Warnke2011} very recently showed that all Achlioptas processes have a continuous phase transition. In this work we prove discontinuous phase transitions for a class of Erd\H os-R\'enyi-like processes in which in every step we connect two vertices, one chosen randomly from all vertices, and one chosen randomly from a restricted set of vertices.
\end{minipage}

\end{center} 

\vspace{5mm}


\section{Introduction}
In their seminal paper \cite{MR0125031} from 1960 Erd\H os and R\'enyi analyze the size of the largest component in the prominent random graph model $G_{n,m}$, a graph drawn uniformly at random from all graphs on $n$ vertices with $m$ edges. For any graph $G$ let $L_1(G)$ denote the size of its largest component. We say that an event occurs \emph{asymptotically almost surely} (a.a.s.) if it occurs with probability $1-o(1)$ as $n$ tends to infinity.

\begin{theorem}[\cite{MR0125031}] \label{thm:classical}
For any constant $c > 0$ the following holds.
\begin{itemize}
	\item If $c < 0.5$, then a.a.s.\ $L_1(G_{n, \lceil cn \rceil}) = \mathcal O(\log n)$.
	\item If $c > 0.5$, then a.a.s.\ $L_1(G_{n, \lceil cn \rceil}) = \Omega(n)$, and all other components have $\mathcal O(\log n)$ vertices.
\end{itemize}
\end{theorem}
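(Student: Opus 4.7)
The plan is to first pass from the edge-count model $G_{n,\lceil cn\rceil}$ to the independent-edge model $G_{n,p}$ with $p=2c/n$. Since the property of having largest component $\Omega(n)$ or $\mathcal O(\log n)$ is monotone and since $\cGnm$ and $\cGnp$ with matching expected edge densities agree on such monotone events with high probability (a standard two-round sandwiching argument), it suffices to prove the analogue for $\cGnp$. In $\cGnp$ the component of a fixed vertex $v$ can be explored by the usual breadth-first ``exploration process'' in which we sequentially reveal neighbors of already-discovered vertices; at each step the number of new children is stochastically dominated by $\mathrm{Bin}(n-1,p)$ from above and by $\mathrm{Bin}(n-k,p)$ from below (where $k$ is the number of vertices already explored), so the exploration is well approximated by a Galton--Watson branching process with offspring law close to $\mathrm{Poi}(2c)$.

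For the subcritical regime $c<1/2$, the comparison branching process has mean offspring $2c<1$ and therefore dies out almost surely with total progeny having an exponentially decaying tail. Concretely, I would show that the probability that the exploration from a given vertex reaches size $t$ is at most $e^{-\alpha t}$ for some $\alpha=\alpha(c)>0$; this can be obtained either via a Chernoff bound on a suitable random walk associated with the exploration, or directly from the hitting-time representation of the total progeny. Setting $t=C\log n$ with $C$ large enough and taking a union bound over all $n$ starting vertices yields $L_1=\mathcal O(\log n)$ a.a.s.

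For the supercritical regime $c>1/2$, the comparison branching process has mean $2c>1$ and survives with some probability $\rho=\rho(c)\in(0,1)$. I would first prove that a.a.s.\ a positive fraction of vertices lie in components of size at least $K\log n$ for a sufficiently large constant $K$: this follows by showing that a single-vertex exploration reaches size $K\log n$ with probability at least $\rho-o(1)$ and then controlling the variance of the count of such ``large'' vertices by a two-vertex exploration argument, so that Chebyshev's inequality gives concentration. The main obstacle is the uniqueness of the giant component, i.e.\ ruling out several components of size $\Theta(n)$ and showing that all non-giant components are of size $\mathcal O(\log n)$. I would handle this by a \emph{sprinkling} argument: split $p$ as $p=p_1+p_2-p_1 p_2$ with $p_2=\varepsilon/n$ for a small constant $\varepsilon>0$, apply the previous step to the $p_1$-graph to extract $\Omega(n)$ vertices sitting in components of size at least $K\log n$, and then observe that any two disjoint vertex sets of size $\Omega(\log n)$ become connected by a $p_2$-edge except with probability $e^{-\Omega(\log^2 n)}$. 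A union bound over pairs of such components then forces them all to merge into a single giant component after sprinkling, while any component of size $\omega(\log n)$ that failed to be absorbed would contradict the previous concentration. Transferring back to $\cGnm$ via the sandwiching completes the proof.
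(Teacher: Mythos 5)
The paper does not prove this statement---it is the cited classical Erd\H os--R\'enyi theorem---so your sketch can only be measured against the standard literature argument, which is indeed the route you take: transfer from $G_{n,m}$ to $G_{n,p}$, exploration-process/branching-process comparison with exponential tails for the subcritical case, and a second-moment-plus-sprinkling argument for the supercritical case. The subcritical half of your sketch is correct as outlined.

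There is, however, a genuine quantitative error in your sprinkling step. With $p_2=\varepsilon/n$, the probability that no $p_2$-edge joins two disjoint sets of sizes $a$ and $b$ is $(1-p_2)^{ab}=\exp\bigl(-(1+o(1))\varepsilon ab/n\bigr)$. For $a=b=\Theta(\log n)$ this is $\exp\bigl(-\Theta(\log^2 n/n)\bigr)=1-o(1)$, not $\textup{e}^{-\Omega(\log^2 n)}$: sprinkling at density $\varepsilon/n$ does \emph{not} connect logarithmic-sized sets, and the step as written fails. One needs $ab\gg n\log n$ for the subsequent union bound over pairs, i.e.\ sets of polynomial size. The standard fix, which is the idea missing from your outline, is an intermediate ``no middle ground'' step: a first-moment bound on the exploration walk (whose drift up to time $n^{2/3}$ is at least $(2c-1)t/2$) shows that a.a.s.\ no component has size in the window $[K\log n,\, n^{2/3}]$. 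Your second-moment argument then actually produces $\Omega(n)$ vertices lying in components of size at least $n^{2/3}$; there are at most $n^{1/3}$ such components, any two are joined by a $p_2$-edge except with probability $\exp(-\varepsilon n^{1/3})$, and the union bound over pairs goes through. The same window argument is what yields the claim that all non-giant components are $\mathcal O(\log n)$; as you state it, that claim also rests on the invalid log-scale sprinkling. A minor further point: ``all other components are $\mathcal O(\log n)$'' is not a monotone property, so the sandwiching you invoke does not apply to it directly; one should instead use $\Pr[G_{n,m}\in\mathcal P]=\mathcal O(\sqrt m)\cdot\Pr[G_{n,p}\in\mathcal P]$ together with the fact that the relevant $G_{n,p}$ failure probabilities can be made polynomially small.
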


This result can be seen from a random graph \emph{process} perspective. Starting with the empty graph on the vertex set $[n] := \{1, 2, \dots, n\}$, we add a single edge chosen uniformly at random from all non-edges in every step. It is not hard to see that the graph after inserting $m$ edges is distributed as $G_{n,m}$. In this context Theorem~\ref{thm:classical} states that, asymptotically, a linear-sized component (a so called `giant') appears around the time when we have inserted $n/2$ edges. We say that the Erd\H os-R\'enyi process has a \emph{phase transition} at $n/2$.

This phase transition has been studied in great detail (for a survey see Chapter 3 in \cite{purple}). It is known that $L_1(G_{n,cn})$ a.a.s.\ satisfies $L_1(G_{n,cn}) = (1+o(1)) f(c) n$ for some continuous function $f(c)$ with $f(c) = 0$ for every $c < 0.5$ and $\lim_{c \rightarrow 0.5^+} f(c) = 0$. Thus, the phase transition in the Erd\H os-R\'enyi process is continuous\footnote{In the literature such a phase transition is also called \emph{second order} while discontinuous phase transitions are called \emph{first order}.} (see Figure~\ref{fig:plots}).

A variant of the Erd\H os-R\'enyi process which gained much attention over the last decade, mostly concerning the question whether one can delay or accelerate the appearance of the giant component \cite{MR1848028,MR2375718}, is given by the so-called \emph{Achlioptas processes}. In this variant one starts with the empty graph on the vertex set $[n]$, and in every step gets presented a pair of edges chosen uniformly at random from all pairs of non-edges in the current graph. A fixed edge-selection rule then selects exactly one of them to be inserted into the graph while the other is put back into the pool of non-edges.

\begin{figure}[tb]
\centering
\begin{tabular}{cc}
\includegraphics[width=190pt]{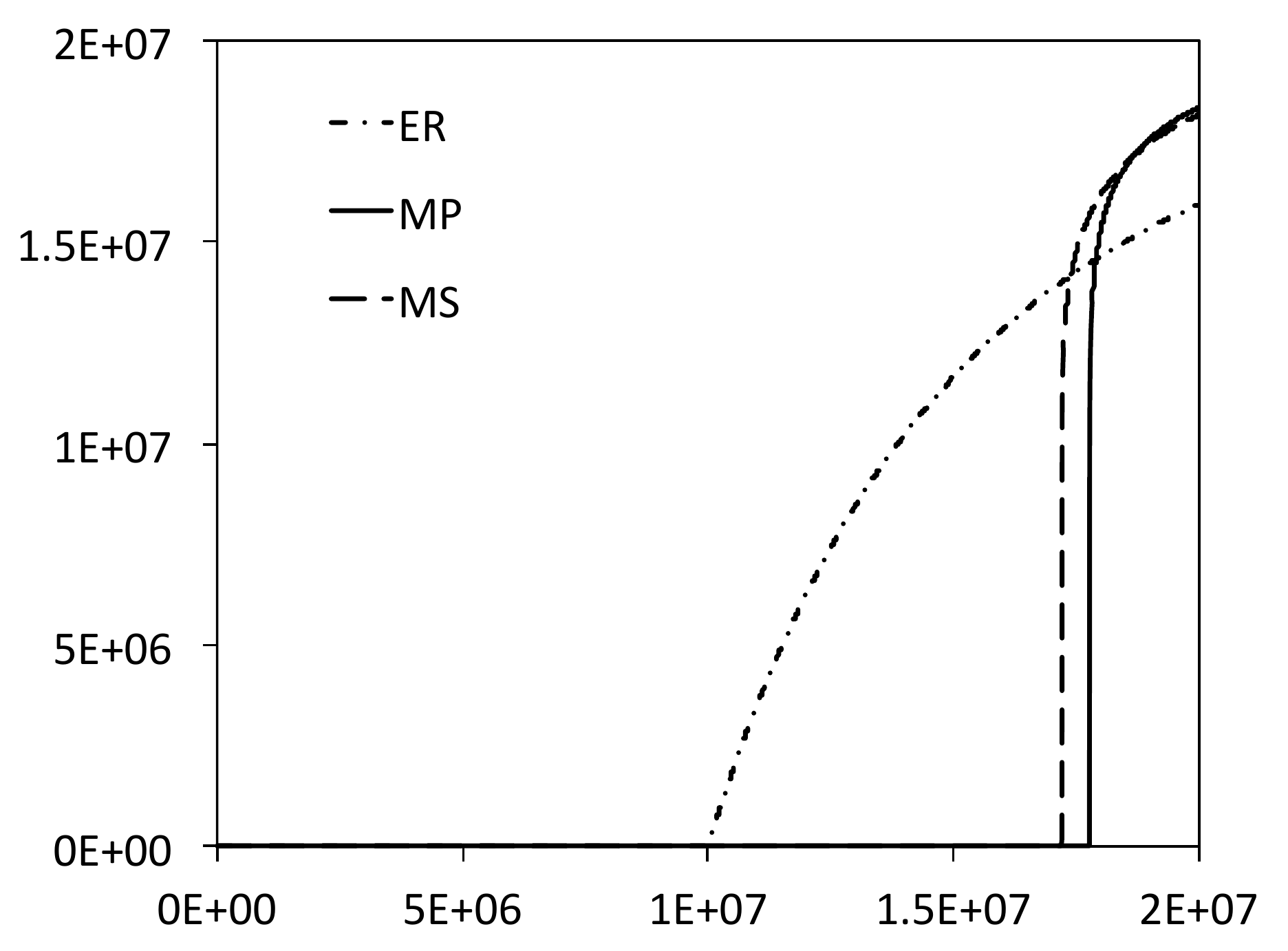} &
\includegraphics[width=190pt]{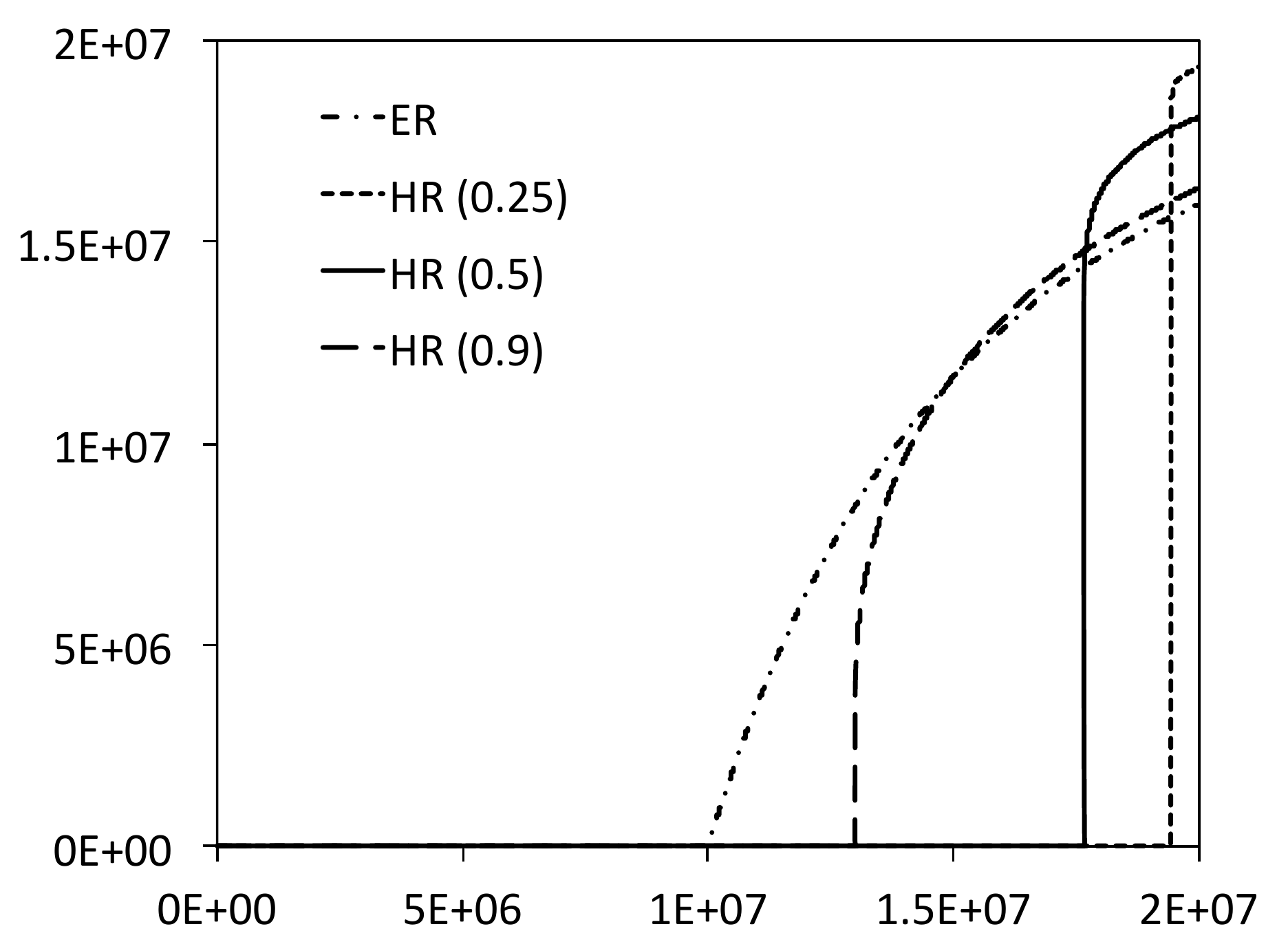}
\end{tabular}
\caption{Evolution of the largest component over the first $n$ edges of \textbf{E}rd\H os-\textbf{R}\'enyi process, \textbf{m}in-\textbf{p}roduct and \textbf{m}in-\textbf{s}um rule and the \textbf{h}alf-\textbf{r}estricted process with parameters 0.25, 0.5 and 0.9 on 20 million vertices}
\label{fig:plots}
\end{figure}

Recently, Achlioptas, D'Souza and Spencer \cite{MR2502843} provided strong numerical evidence that the min-product (select the edge that minimizes the product of the component sizes of the endpoints) and min-sum rule (select the edge that minimizes the respective sum) exhibit discontinuous transitions, in contrast to a variety of closely related edge-selection rules, in particular the ones analyzed in \cite{MR2375718}. The authors conjectured that a.a.s.\ the number of edge insertions between the appearance of a component of size $\sqrt{n}$ and one of size $n/2$ is at most $2 n^{2/3} = o(n)$, that is, at the phase transition a constant fraction of the vertices is accumulated into a single giant component within a sublinear number of steps (see Figure~\ref{fig:plots}). This phenomenon is also called \emph{explosive percolation} and of great interest for many physicists. Thus, a series of papers has been devoted to understanding this phase transition (see e.g. \cite{PhysRevE.81.030103,PhysRevLett.103.135702,PhysRevLett.104.195702,PhysRevE.81.036110}), most of the arguments not being rigorous but supported by computer simulations.

Countering the numerical evidence it was claimed in \cite{PhysRevLett.105.255701} that the transition is actually continuous. Recently, Riordan and Warnke in \cite{Warnke2011} indeed confirmed this claim with a rigorous proof. In fact, their argument shows continuous phase transitions for an even larger class of processes. A random graph process $(G(T))_{T \ge 0}$ is called an $\ell$-vertex rule if $G(0)$ is empty, and $G(T)$ is obtained from $G(T-1)$ by drawing a set $V_T$ of $\ell$ vertices uniformly at random and adding only edges within $V_T$, where at least one edge has to be added if all vertices in $V_T$ are in different components. Observe that every Achlioptas process is a $4$-vertex rule.


\section{Our Results}

In this work we prove that a variant of the Erd\H os-R\'enyi process which we call \emph{half-restricted process} exhibits a \emph{discontinuous} phase transition. Intuitively speaking, a discontinuous transition can only occur if one avoids connecting two components that are already large. The idea to avoid this in an Achlioptas process is to present two edges and choose the one that connects smaller components, which essentially is what the min-product and min-sum rule do. However, we pursue a different approach here. We connect two vertices in every step, but we restrict one of them to be within the smaller components.

For every $0 < \beta \le 1$ and every labeled graph $G$ with $n := |V(G)|$ we define the restricted vertex set $R_\beta(G)$ to be the $\lfloor \beta n \rfloor$ vertices in smallest components. Precisely, let $v_1, v_2, \dots, v_n$ be the vertices of $G$ sorted ascending in the sizes of the components they are contained in, where vertices with the same component size are sorted lexicographically. Then $R_\beta(G) := \{v_1, v_2, \dots, v_{\lfloor \beta n \rfloor}\}$.

The half-restricted process has a parameter $0 < \beta \le 1$ and starts with the empty graph $\cR_\beta(0) = \cR_{n,\beta}(0)$ on the vertex set $[n]$. In every step $T \ge 1$ we draw one \emph{unrestricted} vertex $v_1 \in [n]$ uniformly at random and, independently, one \emph{restricted} vertex $v_2 \in R_\beta(\cR_\beta(T-1))$ uniformly at random. We obtain $\cR_\beta(T)$ by inserting an edge between $v_1$ and $v_2$ if the edge is not already present (in which case we do nothing). Note that for $\beta < 1$ the half-restricted process is not an $\ell$-vertex rule.

For a half-restricted process let $\alpha_T$ be the random variable that denotes the maximum size of all components that the restricted vertex can be drawn from in step $T$. Clearly, $\alpha_T$ is increasing in $T$. For every positive integer $k$ we denote the random variable for the last step when $\alpha_T$ is still below $k$ by
$$
T_k := \max \{ T : \alpha_T < k \} \enspace.
$$

Our main result is that for any parameter $d < 1$ the half-restricted process exhibits explosive percolation. Thus, even though Figure~\ref{fig:plots} suggests that the phase transitions of the min-product (or min-sum) rule and the half-restricted process behave similarly, their mathematical structure is fundamentally different.
\begin{theorem}[Main Result] \label{thm:hr_main}
Let $0 < \beta < 1$. For every $K = K(n)$ with $\ln (n)^{1.02} \le K \le n$, every $C = C(n)$ with $1 \ll C \ll \ln(K)$, and every $\varepsilon > 0$ we have a.a.s.\ that
\begin{itemize}
\item[(i)] $L_1(\cR_\beta( T_C )) \le K$, and
\item[(ii)] $L_1(\cR_\beta( T_C + n/\ln(C))) \ge (1 - \varepsilon) (1-\beta) n$.
\end{itemize}
\HT{Replace $\ln(C)$ in $(ii)$ by a arbitrary $D \ll \sqrt{C}$}
\end{theorem}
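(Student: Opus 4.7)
The plan is to split the argument into the subcritical part~(i) and the supercritical part~(ii), using throughout the direct consequence of the definition of $T_C$: at every step $T \le T_C$ the restricted vertex lies in a component of size strictly less than $C$, and hence the largest component $B$ satisfies $B \cap R_\beta = \emptyset$ whenever $|B| \ge C$.

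\textbf{Part (i).} I would first verify, via a crude first-moment bound on the accumulated edge density, that $T_C = \Theta(n)$ a.a.s., so in particular $T_C \le C_1 n$ for a constant $C_1 = C_1(\beta)$. With this in hand the plan is to control the size of the largest component $B(T)$ by a one-step supermartingale estimate. Whenever $|B(T)| \ge C$ one has $|B(T)| > \alpha_{T+1}$, so that $B(T) \cap R_\beta = \emptyset$, and $B(T)$ can only be enlarged when $v_1$ falls in it (probability $|B(T)|/n$) by absorbing $v_2$'s component, which has size $<C$. This gives
\[
\E\bigl[\,|B(T+1)| - |B(T)| \,\bigm|\, \cF_T\,\bigr] \;\le\; \frac{(C-1)\,|B(T)|}{n},
\]
with the deterministic bound $|B(T+1)| - |B(T)| \le C-1$ covering the small regime $|B(T)| < C$. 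Iterating yields $\E[|B(T_C)|] \le \exp(C T_C / n) = e^{O(C)}$. Since $C \ll \ln K$ this expectation equals $K^{o(1)}$, so a direct Markov estimate $\mathbb P[\,|B(T_C)| > K\,] \le \E[|B(T_C)|]/K = K^{-1+o(1)} = o(1)$ delivers~(i).

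\textbf{Part (ii).} For the explosion the same structural fact is exploited in the opposite direction. By definition of $T_C$, at step $T_C + 1$ the total mass in components of size at least $C$ first exceeds $(1-\beta)n$; combined with~(i), which forbids any such component from exceeding $K$, at least $(1-\beta)n/K$ distinct ``medium'' components are present, and at least one of them (call it $M$) intersects $R_\beta$ non-trivially. From then on, each step produces a merger of $M$ with another medium component with probability at least
\[
\Omega\!\Bigl(\tfrac{|M \cap R_\beta|}{\beta n}\Bigr) \cdot \Omega(1-\beta),
\]
since the restricted endpoint hits $M \cap R_\beta$ while the unrestricted endpoint lands in the $\Omega(1-\beta)$-fraction of vertices sitting in other medium components. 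The plan is then to iterate this chain reaction by tracking the joint evolution of the sizes \emph{and} the restricted footprints of the largest few components, and to show that the leading attractor absorbs essentially all of the $(1-\beta)n$ mass lying outside $R_\beta$ within the allowed window of $n/\ln C$ steps.

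\textbf{Main obstacle.} The hardest part will be obtaining the tight time bound in~(ii). One must rule out the scenario in which several medium components grow in parallel and share the $(1-\beta)n$ mass roughly equally, rather than a single attractor absorbing almost all of it; and one must simultaneously control the restricted footprint $|M \cap R_\beta|$ of the leading component, which a~priori shrinks as $M$ grows out of $R_\beta$. A naive ``one-attractor, exponential-growth'' analysis gives only a time of order $(n/C)\log(n/C)$, which can exceed $n/\ln C$ in the lower range of $C$, so a finer argument -- presumably exploiting the simultaneous expansion of many medium components and their pairwise interactions through small restricted ``bridge'' components -- is required.
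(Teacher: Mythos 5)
Your plan for part (i) has two genuine gaps. First, the one-step drift bound $\E[|B(T+1)|-|B(T)|\mid\cF_T]\le (C-1)|B(T)|/n$ is not valid if $B(T)$ denotes the \emph{largest} component: $L_1$ also increases when the unrestricted vertex lands in a \emph{different} component of size greater than $L_1(T)-C$ whose merge with the restricted component overtakes the current leader. In this process the typical pre-explosion configuration consists of many components of comparable medium size, so the conditional drift of $L_1$ can be $\Theta(1)$ per step rather than $O(CL_1/n)$, and the estimate $\E[L_1(T_C)]\le e^{O(C)}$ does not follow. The bound you wrote is correct only for a \emph{fixed} component tracked through time (this is what the paper does, calling them ``chunks''), but then you must union-bound over up to $n/C$ such components, and here the second gap appears: Markov's inequality gives failure probability only $K^{-1+o(1)}$ per chunk, and $\frac{n}{C}\cdot K^{-1+o(1)}$ is not $o(1)$ when $K$ is as small as $\ln(n)^{1.02}$. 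One needs an exponentially small tail per chunk; the paper obtains $e^{-(K/C)^{0.99}}\le 1/n$ via a large-deviation lemma for the coupon-collector-type sum of geometric waiting times between the $K/C$ hits a chunk must receive. Some concentration input of this strength is unavoidable, and your proposal has no substitute for it.

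Part (ii) is a plan rather than a proof, and it stalls exactly where you say it does; moreover, the single-attractor strategy is doomed from the start because at time $T_C+1$ the excess $|U|-(1-\beta)n$ of the set $U$ of vertices in components of size $\ge C$ over the complement of the restricted set --- and hence the restricted footprint $|M\cap R_\beta|$ of any one component --- can be $O(1)$, making your per-step merger probability $O(1/n)$. The paper avoids both obstacles with a two-phase argument that requires no attractor tracking. Phase 1: during the first $a=n/(2\ln C)$ extra steps, each step adds at least one vertex to $U$ with probability at least $(1-\beta)/2$ (unrestricted vertex in $U$, restricted vertex outside $U$), so a.a.s.\ the excess grows to $\Omega(n/\ln C)$; this excess is exactly the number of $R_\beta$-vertices lying inside $U$. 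Phase 2: assume for contradiction that no component reaches $(1-\varepsilon)(1-\beta)n$ by time $T_C+2a$; then every one of the next $a$ steps merges two distinct components of $U$ with probability at least $\frac{1-\beta}{8\beta\ln C}\cdot\varepsilon(1-\beta)$, and since every component of $U$ has size at least $C$, only $(1-\beta)n/C\ll a\cdot p$ such mergers are needed to connect all of $U$ into one component of size at least $(1-\beta)n$. Counting successful mergers globally, rather than following the mass of a leading component, completely sidesteps the ``parallel growth'' scenario you were worried about. As written, your proposal does not contain this (or any complete) argument for (ii).
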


Note that for $K = \ln(n)^2$ and $C = \ln \ln \ln n$ the theorem shows that, a.a.s., the number of steps from the first appearance of a component of size $\ln(n)^2$ to the appearance of a component of size $((1-\beta)/2) \cdot n$ is $\bigO(n/\ln \ln \ln \ln n) = o(n)$.


\section{Proof of the Main Result}

In our proof we need a rather technical lemma introduced in the following.

Let $N$ be a positive integer, and for every $0 \le i < N$ let $X_i \sim \mathcal Geom(\frac{N-i}{N})$ be a geometrically distributed random variable and set $X(a,b) := \sum_{i=a}^{b} X_i$ for every $0 \le a < b < N$. All subsequent statements and arguments are about sums of geometrically distributed random variables, but we will use that they have the following combinatorial interpretation to a coupon collector scenario with $N$ coupons. (In a coupon collector scenario, we have a number of different coupons and repeatedly draw one uniformly at random with replacement. We are interested in how often we have to draw until we have seen every coupon.) Observe that $X_i$ is distributed like the number of coupons we need to draw while holding exactly $i$ different coupons (waiting for the $(i+1)st$), and thus $X(a,b)$ can be seen as the number of coupons we draw while holding between $a$ and $b$ different coupons or, equivalently, to obtain $b-a+1$ coupons of a fixed subset of $N-a$ coupons.

Note that
\begin{equation} \label{eq:partial_cc}
\E[X(a,b)] = \sum_{i=a}^{b} \frac{N}{N-i} = N \sum_{i=N-b}^{N-a} \frac1i = N (H_{N-a} - H_{N-b-1}) \enspace,
\end{equation}
where $H_n = \sum_{i=1}^{n} 1/i$ denotes the $n$-th harmonic number for every $n \ge 1$.
The following lemma concerns the situation when $a = N - \omega(1)$ and $b = N-2$. In terms of coupon collector, we want to collect all but $1$ coupon of a fixed subset of $\omega(1)$ coupons. We show an exponential bound on the probability that the number of coupons we need to draw for this is much lower than its expectation.

\begin{lemma} \label{lem:partial_cc}
Let $k = k(N) = \omega(1)$ and $s = s(N) \ll N \ln(k)$. 
Then for $N$ large enough
$$
\Pr[ X(N-k,N-2) \le s ] \le \textup{e}^{-k^{0.99}} \enspace.
$$
\end{lemma}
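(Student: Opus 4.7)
My approach is to translate the statement into a standard balls-and-bins tail estimate via the coupon-collector correspondence already sketched in the excerpt. Concretely, $X(N-k,N-2)$ equals the number of i.i.d.\ uniform draws from $[N]$ needed to see at least $k-1$ out of a fixed subset $S$ of $k$ ``target'' coupons. So if $Z$ denotes the number of coupons in $S$ still unseen after exactly $s$ independent uniform draws from $[N]$, then $\Pr[X(N-k,N-2)\le s]=\Pr[Z\le 1]$. Writing $Z=\sum_{j=1}^{k} Z_j$ with $Z_j$ the indicator that target coupon $j$ remains unseen, the mean is $\mu := \E[Z]=k(1-1/N)^s$.

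The first step is to lower-bound $\mu$ using the hypothesis $s\ll N\ln k$. For any fixed $\varepsilon>0$ and $N$ large enough we have $s\le\varepsilon N\ln k$, and combined with the elementary inequality $1-1/N\ge e^{-2/N}$ (valid for all $N\ge 2$) this yields
$$\mu \;\ge\; k\cdot e^{-2s/N} \;\ge\; k\cdot e^{-2\varepsilon\ln k} \;=\; k^{\,1-2\varepsilon}.$$
Choosing $\varepsilon$ small enough (say $\varepsilon = 0.001$) gives $\mu \ge k^{0.998}$ for $N$ sufficiently large.

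The second step is a lower-tail Chernoff bound on $Z$. The key input is that the indicators $(Z_j)_{j=1}^{k}$ are \emph{negatively associated}, since they coincide with the empty-bin indicators when $s$ balls are thrown uniformly at random into $N$ bins---a classical result of Joag-Dev and Proschan. Consequently, for every $\lambda > 0$,
$$\E[e^{-\lambda Z}] \;\le\; \prod_{j=1}^{k}\E[e^{-\lambda Z_j}] \;=\; \bigl(1-q(1-e^{-\lambda})\bigr)^{k} \;\le\; e^{-\mu(1-e^{-\lambda})},$$
where $q=(1-1/N)^s$. Combined with Markov's inequality $\Pr[Z\le 1]\le e^{\lambda}\E[e^{-\lambda Z}]$ and the optimal choice $\lambda=\ln\mu$, this gives
$$\Pr[Z\le 1] \;\le\; e^{\lambda-\mu(1-e^{-\lambda})} \;=\; \mu\,e^{\,1-\mu} \;\le\; e^{-k^{0.99}}$$
for $N$ (and hence $k$) sufficiently large, as desired. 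The only real obstacle is justifying the product-form moment-generating-function bound, i.e.\ the negative-association inequality; if one prefers to avoid invoking it as a black box, the same estimate can be obtained by splitting $\Pr[Z\le 1]=\Pr[Z=0]+\Pr[Z=1]$ and bounding each summand by $O(\mu)\,e^{-\mu(1-o(1))}$ via the elementary inequality $\Pr\bigl[\bigcap_{j\in T}\{Z_j=1\}\bigr]=(1-|T|/N)^s\le (1-1/N)^{s|T|}$ on any subset $T\subseteq S$ of target coupons. Everything else is routine calculation.
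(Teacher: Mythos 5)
Your proof is correct and follows essentially the same route as the paper's: both translate the event into the coupon-collector setting, lower-bound the expected number of unseen target coupons by $k^{1-o(1)}$ using $1-1/N \ge e^{-2/N}$, and then apply a lower-tail Chernoff bound justified by the negative association of the unseen-coupon indicators. The only difference is cosmetic: you unpack the Chernoff bound explicitly via the moment generating function, Markov's inequality, and the choice $\lambda=\ln\mu$, whereas the paper invokes the Chernoff bound for negatively associated variables as a black box.
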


\begin{proof}
First note that by \eqref{eq:partial_cc} we have
\begin{align*}
\E[ X(N-k,N-2) ] = N (H_{k} - H_1) \stackrel{k=\omega(1)}= (1 + o(1)) N \ln (k) \enspace.
\end{align*}
Thus, $s \ll \E[X(N-k,N-2)]$. Consider $N$ different coupons and let $\{c_1, c_2, \dots, c_k\}$ be a fixed subset of them. We now draw $s$ coupons with replacement and call the resulting set $F$. For every $1 \le j \le k$ let $Y_j$ be an indicator variable for the event that coupon $c_j$ is not drawn within these $s$ trials, i.e., $c_j \notin F$. By the comments preceding this lemma, the probability of the event $X(N-k,N-2) \le s$ equals the probability that at most $1$ of these coupons is not drawn within the $s$ trials. Hence, for $Y := \sum_{j=1}^{k} Y_j$ we observe that
\begin{align} \label{eq:cc_chernoff_relation}
\Pr[ X(N-k, N-2) \le s] = \Pr [ Y \le 1] \enspace.
\end{align}
Using the identity $1-x \ge \textup{e}^{-2x}$ which holds for all $0 \le x \le 1/2$ we have for every $1 \le j \le k$ that
$$
\E[ Y_j ] = \Pr[ Y_j = 1] = \Big( 1 - \frac1N \Big)^s \ge \textup{e}^{-\frac{2s}N} \enspace,
$$
and thus, for $N$ large enough,
\begin{equation} \label{eq:EY}
\E[ Y ] \ge k \textup{e}^{- \frac{2s}N} = k^{1 - \frac{s}{N \ln(k)}}  \stackrel{s \ll N \ln(k)}\ge 8 k^{0.99} \enspace.
\end{equation}
One can check that the random variables $Y_1, Y_2, \dots, Y_k$ are negatively associated (see e.g.\ Chapter 3 in \cite{dubhashi2009concentration}). Hence, we can apply Chernoff bounds to $Y$ and obtain for $N$ large enough that
$$
\Pr[ Y \le 1 ] \stackrel{\eqref{eq:EY}}\le \Pr \left[ Y \le \Big(1 - \frac12 \Big) \E[Y] \right] \le \textup{e}^{-\E[Y]/8} \stackrel{\eqref{eq:EY}}\le \textup{e}^{-k^{0.99}} \enspace,
$$
which together with \eqref{eq:cc_chernoff_relation} finishes the proof.
\end{proof}

We now turn to the proof of Theorem~\ref{thm:hr_main}.

\begin{proof}[Proof (of Theorem~\ref{thm:hr_main})]
We fix $\beta < 1, K = K(n)$ with $\ln(n)^{1.02} \le K \le n$, $C = C(n)$ with $1 \ll C \ll \ln(K)$, and $\varepsilon > 0$. To simplify notation we write $\cR(T)$ instead of $\cR_\beta(T)$.

We first address $(i)$. We need to show that at the step when the restricted vertex can for the first time be in a component of size $C$, there is a.a.s.\ no component of size larger than $K$. The main idea is that a large component needs to be drawn by the unrestricted vertex so often that this is unlikely to happen within $T_C$ steps.

Note that up to step $T_C$ two components of size at least $C$ can never be merged by an edge since the restricted vertex in every step is drawn from vertices in components of size less than $C$. Hence, we can easily keep track of these components. We call a component $A$ in $\cR(T)$ a \emph{chunk} if it has size at least $C$. Let $A_1, A_2, \dots$ denote all chunks in order of appearance during the process. By the pigeonhole principle, there can be at most $n/C$ chunks. For every $1 \le i \le n/C$ we denote by $\mathcal E_i$ the event that chunk $A_i$ has size larger than $K$ in $\cR(T_C)$. We will show that $\Pr[ \mathcal E_i ] \le 1/n$ for every $1 \le i \le n$. By applying the union bound this implies
\begin{equation} \label{eq:unionbound_Ei}
\Pr[ \cR(T_C) \text{ contains a comp. of size} > K ] = \Pr \left[ \bigcup_{i = 1}^{n / C} \mathcal E_i \right] \le \frac nC \cdot \frac1n \stackrel{C \gg 1}= o(1) \enspace.
\end{equation}
It remains to bound $\Pr[\mathcal E_i]$ for every $1 \le i \le n/C$. Let $1 \le i \le n/C$ be fixed for the remainder of the proof. Since for every $T \le T_C$ the restricted vertex is drawn from vertices in components of size smaller than $C$, the chunk $A_i$ can grow by at most $C$ in every step. Hence, the chunk has size at most $(j+1) \cdot C$ before a vertex from the chunk is drawn for the $j$th time. Since moreover, a chunk has size at most $2C$ when it appears, a vertex from the chunk needs to be drawn in at least $K/C - 1$ steps after its appearance for $\mathcal E_i$ to happen. Let $X_2, X_4, \dots, X_{K/C}$ denote the number of steps between steps in which we draw a vertex from $A_i$. That is, $X_2$ is the number of steps from the appearance of the chunk until a vertex from $A_i$ is drawn for the first time, $X_3$ is the time from that step until a vertex from $A_i$ is drawn for the second time, and so on. Furthermore, let $X := \sum_{j=2}^{K/C} X_j$. Then,
\begin{equation} \label{eq:relation_Ei_X}
\Pr[ \mathcal E_i] \le \Pr[ X \le T_C] \enspace,
\end{equation}
and it thus suffices to show that $\Pr[ X \le T_C ] \le 1/n$. Clearly, only the unrestricted vertex in every step can be in $A_i$, and thus, $X_j$ is stochastically dominated by a geometrically distributed random variable with parameter $jC/n$. Hence, let $Y_j$ be a random variable with $Y_j \sim \mathcal Geom(jC/n)$ and set $Y := \sum_{j=2}^{K/C} Y_j$. Then
\begin{equation} \label{eq:relation_X_Y}
\Pr[ X \le T_C ] \le \Pr[ Y \le T_C ] \enspace.
\end{equation}
This is exactly the setup of Lemma~\ref{lem:partial_cc} with $N = n/C$, $k = K/C$ and $s = T_C$. Concerning the prerequisites of the lemma, we have $k = \omega(1)$ since $C \ll \ln(K) \ll K$. Furthermore, it is not hard to show that $T_C \le 4n$ (see e.g. Lemma~3 and the remark following the proof in \cite{Warnke2011}) and thus
$$
N \ln(k) = \frac nC \cdot \ln \Big( \frac KC \Big) \stackrel{C \ll \ln(K)} = (1-o(1)) \frac {n \ln(K)}{C} \stackrel{C \ll \ln(K)} \gg 4n \ge T_C = s \enspace.
$$
Hence, we can apply Lemma~\ref{lem:partial_cc}, which gives us for $n$ large enough that
$$
\Pr[ Y \le T_C ] \le \textup{e}^{-(K/C)^{0.99}} \enspace.
$$
Since $K \ge \ln(n)^{1.02}$ and $C \ll \ln(K)$ we have
$$
\Pr[ Y \le T_C ] \le \textup{e}^{-\ln(n)} = \frac1n \enspace.
$$
Using \eqref{eq:relation_X_Y}, \eqref{eq:relation_Ei_X} and \eqref{eq:unionbound_Ei} this settles the proof of $(i)$, and it remains to prove $(ii)$, i.e., we have to show that $\cR(T_C + n/\ln(C))$ contains a component of size $(1-\varepsilon)(1-\beta)n$ with high probability.

We set $a := n/(2 \ln(C))$ and split the proof into two parts. In the first part (the first $a$ additional steps after $T_C$) we collect a suitable amount of vertices in components of size at least $C$, and in the second part (the remaining $a$ steps) we actually build a giant component on these vertices.

Consider the steps $T_C + 1$ to $T_C + a$ in the graph process. Let $U(T)$ denote set of vertices in components of size at least $C$ in $\cR(T)$. We show a lower bound on the size of $U(T_C+a)$ that holds with high probability. Note that by definition of $T_C$ we have $|U(T_C+1)| \ge (1 - \beta)n$. If $|U(T_C+a)| \ge (1-\beta/2)n$, we have a bound that is by far good enough for our purposes, hence we restrict ourselves to the case $|U(T_C + i)| \le (1-\beta/2)n$ for every $1 \le i \le a$.

For every $1 \le i \le a$ let $X_i := |U(T_C + i)| - |U(T_C+i-1)|$ denote the number of vertices added to the components of size at least $C$ in the $i$th additional step. Furthermore, let $X := \sum_{i=1}^{a} X_i$. We lower bound the probability that $X_i$ contributes at least 1 vertex. Clearly, this happens if the unrestricted vertex is drawn from components of size at least $C$, which by definition of $T_C$ happens with probability at least $1-\beta$, and if the restricted vertex is drawn from components of size smaller than $C$, which happens with probability at least $1/2$ since we assume $|U(T_C+i)| \le (1-\beta/2)n$. Hence, for every $1 \le i \le a$ we have
$$
\Pr[ X_i \ge 1 ] \ge \frac{1-\beta}2 \enspace.
$$
Hence, $\E[X] \ge a(1-\beta)/2 = (1-\beta)n/(4\ln(C))$. In particular, $X$ stochastically dominates a sum of $a$ independent Bernoulli random variables with parameter $(1-\beta)/2$, such that we can use Chernoff bounds to obtain
$$
\Pr \left[ X < \frac12 \cdot \frac{(1-\beta) n}{4 \ln(C)} \right] \le \textup{e}^{-\Theta(\frac{n}{\ln(C)})} = o(1) \enspace.
$$
Hence, we can in the following condition on the event that
\begin{align} \label{eq:Cregime_lb}
|U(T_C+a)| \ge (1-\beta)n + \frac{1-\beta}{8 \ln(C)} n \enspace.
\end{align}
We now look at the second half of additional steps, i.e., steps $T_C+a+1$ to $T_C+ 2a$, and show that a.a.s. in these steps a sufficiently large component is created within $U := U(T_C+a)$. Note that $U$ is a fixed set of vertices which does not change from step to step.

Assume that $\cR(T_C + 2a)$ has no component of size $(1-\varepsilon)(1-\beta)n$. We now show that a.a.s.\ this assumption leads to a contradiction.

We call a step \emph{successful} if it connects two components in $U$. Since every component in $U$ has size at least $C$ we have that
\begin{align} \label{eq:num_suc_steps}
\frac{(1-\beta)n}{C}
\end{align}
successful steps will connect all components in $U$ such that $U$ forms one giant component of size at least $(1-\beta)n$. We now compute the probability to have a successful step if we do not have a component of size $(1 - \varepsilon) (1-\beta) n$. For a successful step, the restricted vertex needs to be in $U$ which happens with probability at least
$$
\frac{|U| - (1-\beta)n}{\beta n} \stackrel{\eqref{eq:Cregime_lb}}\ge \frac{1-\beta}{8 \beta \ln(C)} \enspace,
$$
and the unrestricted vertex needs to be drawn from a different component in $U$, which happens with probability at least
$$
\frac{|U| - (1 - \varepsilon)(1-\beta)n}{n} \stackrel{\eqref{eq:Cregime_lb}}\ge \varepsilon(1-\beta) \enspace.
$$
Hence, the probability to have a successful step is at least
$$
p := \frac{\varepsilon(1-\beta)^2}{8 \beta \ln(C)} \enspace.
$$
Thus, the number $T$ of successful steps stochastically dominates a sum of $a$ independent Bernoulli distributed random variables with parameter $p$, and we obtain that
\begin{align} \label{eq:ET}
\E[T] \ge ap = \frac{n}{2 \ln(C)} \cdot \frac{\varepsilon(1-\beta)^2}{8 \beta \ln(C)} = \frac{\varepsilon(1-\beta)^2}{16 \beta} \cdot \frac{n}{\ln(C)^2} \enspace,
\end{align}
and by Chernoff bounds
\begin{align*}
\Pr \Big[ T \le \frac{(1-\beta)n}{C} \Big] &= \Pr \left[ T \le \left( 1 - \Big(1 - \frac{16 \beta \ln(C)^2}{(1-\beta) \varepsilon C} \Big) \right) \frac{\varepsilon(1-\beta)^2}{16 \beta} \cdot \frac{n}{\ln(C)^2} \right] \\
& \stackrel{\eqref{eq:ET}}\le \Pr \left[ T \le \left(1 - \Big(1 - \frac{16 \beta \ln(C)^2}{(1-\beta) \varepsilon C} \Big) \right) \E[T] \right] \\
& \stackrel{\ln(C)^2/C = o(1)}\le \Pr \Big[ T \le \frac12 \E[T] \Big] \\
& \le \textup{e}^{-\E[T] / 8} \\
& \stackrel{\eqref{eq:ET}}= \textup{e}^{-\Omega( \frac{n}{\ln(C)^2})} = o(1) \enspace.
\end{align*}
Hence, we a.a.s.\ have at least $(1-\beta)n/C$ successful steps, which together with the considerations around \eqref{eq:num_suc_steps} contradicts the assumption to have no component of size $(1-\varepsilon)(1-\beta)n$.
\end{proof}


\bibliography{refs}
\bibliographystyle{plain}

\end{document}